\documentclass[preprint,12pt]{elsarticle}

\usepackage{amsmath, amssymb, amsfonts}
\usepackage{amsthm}
\usepackage{graphicx}

\newtheorem{theorem}{Theorem}

\newtheorem{definition}[theorem]{Definition}
\newtheorem{example}[theorem]{Example}

\newtheorem{proposition}[theorem]{Proposition}
\newtheorem{remark}[theorem]{Remark}

\journal{Physica A: Statistical Mechanics and its Applications}

\begin{document}
\begin{frontmatter}

\title{Breakdown of Perturbative Expansions and \\Exact Algebraic Absorption of Finite-Size Fluctuations\\ in Statistical Mechanics}

\author{Hiroki Suyari}

\affiliation{organization={Graduate School of Informatics, Chiba University},
            addressline={1-33, Yayoi-cho, Inage-ku}, 
            city={Chiba},
            postcode={263-8522}, 
            state={Chiba},
            country={Japan}}


\begin{keyword}
Tsallis statistics, $q$-deformation, finite-size fluctuation, large deviations, Edgeworth expansion, exact algebraic absorption
\end{keyword}

\begin{abstract}
In statistical mechanics, evaluating finite-size macroscopic fluctuations typically relies on Edgeworth expansions.
However, these perturbative methods append additive polynomial corrections that break down in the large deviation regime, yielding unphysical negative probabilities.
We propose a structural resolution: rather than relying on additive polynomials, we absorb finite-size skewness using a globally stable $q$-deformed framework.
By introducing a dynamic scaling law $1-q_n = O(n^{-1})$ for the nonextensivity parameter, we prove this $q$-deformed framework captures macroscopic higher-order fluctuations in independent and identically distributed (i.i.d.) systems.
Specifically, this algebraic tuning absorbs third-order skewness while guaranteeing probability density nonnegativity across the entire domain.
Furthermore, the $k$-th degree term of this $q$-logarithmic expansion corresponds to the $O(n^{1-k/2})$ asymptotic order of classical $(k+1)$-th moment Edgeworth corrections.
This correspondence functions as a stable resummation of divergent asymptotic expansions, establishing a mathematical bridge between finite-size i.i.d. fluctuations and the Tsallis statistics governing complex systems.
\end{abstract}

\end{frontmatter}

\section{Introduction}
\label{sec:introduction}

In statistical mechanics and probability theory, analyzing systems solely in the macroscopic or thermodynamic limit ($n \to \infty$, where $n$ represents the system size or the sequence length) often obscures finite-size scaling structures.
While the Central Limit Theorem guarantees a Gaussian baseline for the fluctuations of extensive variables near the mean, finite-size systems exhibit strong asymmetry (skewness) and heavy tails in the large deviation regime \cite{Seifert12, Campisi11}.
These non-Gaussian finite-size effects play a role in various physical contexts, ranging from the transient dynamics of strongly correlated systems \cite{TsallisPlastinoZheng97, LyraTsallis98} to the structural complexity at finite observation scales \cite{Halsey86}.
Therefore, characterizing these scale-dependent fluctuations is a mathematical challenge in non-equilibrium and finite-size statistical mechanics.

Conventionally, the statistical analysis of such finite-size effects relies on asymptotic expansion techniques, most notably the Edgeworth expansion (and its inverse, the Cornish-Fisher expansion).
This probabilistic approach attempts to evaluate non-Gaussian tail behaviors by appending higher-order moments (e.g., Hermite polynomials) as additive correction terms to the standard Gaussian distribution \cite{Feller71, Touchette09}.
While effective for slight deviations, this additive methodology faces structural limitations in highly nonlinear or strongly fluctuating systems, where the accumulation of higher-order perturbative terms leads to a combinatorial explosion, obscuring the physical structure of the fluctuations.

This limitation raises the following question: Rather than treating finite-size non-Gaussian fluctuations as external error penalties that must be additively corrected, is it possible to absorb (renormalize) these fluctuations by modifying the algebraic structure of the information measure itself?
To answer this question, we introduce a unified framework based on the generalized algebraic structure originating from non-extensive statistical mechanics, specifically the $q$-generalized logarithmic mapping \cite{Ts88, Tsallis09}.
Recent studies have increasingly demonstrated the efficacy of such $q$-generalized frameworks in capturing anomalous statistical behaviors and macroscopic fluctuations across various complex physical systems \cite{BOUMALI2023129134}.

We demonstrate that the non-Gaussian fluctuations inherent in finite sequences can be structurally absorbed by treating the deformation parameter $q$ as a dynamically scaling variable $q_n$ dependent on the scale $n$.
Through the mathematical formulation of the centralized $q$-generalized information density, we prove that by imposing the dynamic scaling law $1-q_n = \alpha n^{-1}$, the algebraic structure of the $q$-logarithm corresponds to the macroscopic higher-order fluctuations.
Specifically, by tuning the scaling constant to $\alpha = \frac{T}{3V^2}$ (where $V$ is the variance and $T$ is the skewness of the empirical self-information), our non-additive framework absorbs the third-order finite-size penalty, eliminating the need for classical Hermite polynomials.

The emergence of generalized non-extensive structures (Tsallis statistics) due to finite system constraints is a central topic in statistical physics.
In our preceding study \cite{Hexalogy_PartII}, we demonstrated that the finite heat capacity of an environmental bath deforms the state distribution into a $q$-exponential form.
In a complementary manner, the present paper establishes that the finiteness of the observation sequence or system size $n$ necessitates the deformation of the entropy measure itself into a $q$-logarithmic structure.
This establishment of the ``algebraic renormalization of finite-size fluctuations'' provides a mathematical foundation for the scale-dependent statistical mechanics of finite systems, offering a direct analytical pathway without relying on additive probabilistic approximations.


\section{Preliminaries: Finite-Size Fluctuations and Asymptotic Expansions}
\label{sec:preliminaries}

We briefly review the standard probabilistic tools and their current limitations when analyzing finite-size fluctuations in statistical mechanics.
In particular, we focus on the probabilistic behavior of empirical entropy and the conventional asymptotic expansions used to evaluate finite-size scaling limits.

\subsection{Shannon Empirical Entropy and Fluctuation Variance}
Consider a physical system composed of $n$ independent subsystems, or a time series of uncorrelated observations, denoted by $X^n = (X_1, X_2, \dots, X_n)$ drawn from a known probability distribution $P(X)$.
In the finite-size (or finite-time) regime, the random variable characterizing the microscopic state is the \textit{Shannon empirical entropy} (often referred to as trajectory entropy in stochastic thermodynamics).
Throughout this paper, $\ln$ denotes the natural logarithm, and all informational quantities are expressed in nats.

The Shannon empirical entropy is defined as
\begin{equation}
    s_n := -\ln P(X^n) = \sum_{i=1}^n -\ln P(X_i).
\label{eq:info_density_def}
\end{equation}
In the macroscopic or thermodynamic limit ($n \to \infty$), the Law of Large Numbers dictates that $\frac{1}{n}s_n$ converges to its expectation, the macroscopic Boltzmann-Gibbs (BG) entropy $S_{\text{BG}} = \mathbb{E}[-\ln P(X)]$.

However, for a finite system size $n$, the fluctuation of $s_n$ around its mean plays a role in non-equilibrium and complex systems.
This fluctuation is governed by the variance of the Shannon empirical entropy, termed \textit{varentropy} or information variance, defined as
\begin{equation}
    V := \mathrm{Var}[-\ln P(X)] = \mathbb{E}\left[ \left( -\ln P(X) - S_{\text{BG}} \right)^2 \right].
\label{eq:varentropy_def_prelim}
\end{equation}
Expanding this variance yields a relation for the uncentered second moment, which we will use in later algebraic developments:
\begin{equation}
    \mathbb{E}\left[ (-\ln P(X))^2 \right] = V + S_{\text{BG}}^2.
\label{eq:second_moment_relation}
\end{equation}

\subsection{Second-Order Asymptotics and Finite-Size Scaling}
While the large deviation approach characterizes the overall scaling of macroscopic fluctuations \cite{Touchette09}, the explicit boundary of the entropy fluctuations at a given tail probability $\epsilon$ (i.e., the $\epsilon$-quantile of $s_n$, denoted as $s^\star(n, \epsilon)$) is conventionally derived from the standard Central Limit Theorem.
This leads to the second-order asymptotic expansion:
\begin{equation}
    s^\star(n, \epsilon) = n S_{\text{BG}} + \sqrt{nV} Z_{\epsilon} + O(1),
    \label{eq:verdu_formula}
\end{equation}
where $Z_{\epsilon}$ is the $(1-\epsilon)$-quantile of the standard normal distribution.
The term $\sqrt{nV} Z_{\epsilon}$ represents the fluctuation width (or finite-size penalty) required to absorb the macroscopic deviations under a normal approximation guaranteed by the Central Limit Theorem.

\subsection{Limitations of the Additive Normal Approximation}
While \eqref{eq:verdu_formula} provides a baseline approximation, its reliance on the Gaussian assumption limits its accuracy when the system size $n$ is small or when the underlying distribution $P(X)$ is highly skewed.
For an asymmetric or strongly interacting underlying state (e.g., a system with extreme rare-event probabilities $p \ll 1-p$), the distribution of the Shannon empirical entropy $s_n$ exhibits strong skewness and kurtosis.
In such highly nonlinear regimes, the error induced by the normal approximation becomes significant.

To correct this discrepancy, classical probability theory and statistical physics typically employ the Edgeworth expansion \cite{Feller71} to incorporate higher-order macroscopic moments.
By inverting the Edgeworth expansion, one obtains the Cornish-Fisher expansion, which provides the third-order asymptotic expansion of the threshold as an $O(1)$ constant correction to the finite-size scaling:
\begin{equation}
    s^\star(n, \epsilon) = n S_{\text{BG}} + \sqrt{nV} Z_{\epsilon} + B(T, \epsilon) + O\left(\frac{1}{\sqrt{n}}\right),
    \label{eq:edgeworth_expansion}
\end{equation}
where $B(T, \epsilon)$ is an $O(1)$ additive correction term that explicitly depends on the third moment $T$ and the probability $\epsilon$.
If higher precision is required to capture heavier tails, further nonlinear correction terms involving the fourth moment (kurtosis) and beyond must be analytically derived and appended.

This necessity to derive and append higher-order perturbative terms highlights a structural limitation within the conventional additive normal approximation paradigm.
This mathematically demonstrates the combinatorial explosion discussed in the introduction, thereby motivating a unified constructive framework that systematically absorbs these higher-order fluctuations without requiring step-by-step additive expansions.

To make this structural limitation visually explicit, let us consider a highly asymmetric finite system, such as a biased random walk with $n=12$ independent steps and a success probability of $p=0.1$.
As illustrated in Figure \ref{fig:edgeworth_breakdown_early}, while the perturbative corrections attempt to capture the positive physical skewness ($\gamma_1 \approx 0.77$), truncating the Edgeworth expansion yields unphysical negative probabilities in the deep-tail regime ($Z \lesssim -2.4$).

\begin{figure}[htbp]
    \centering
    \includegraphics[width=0.8\linewidth]{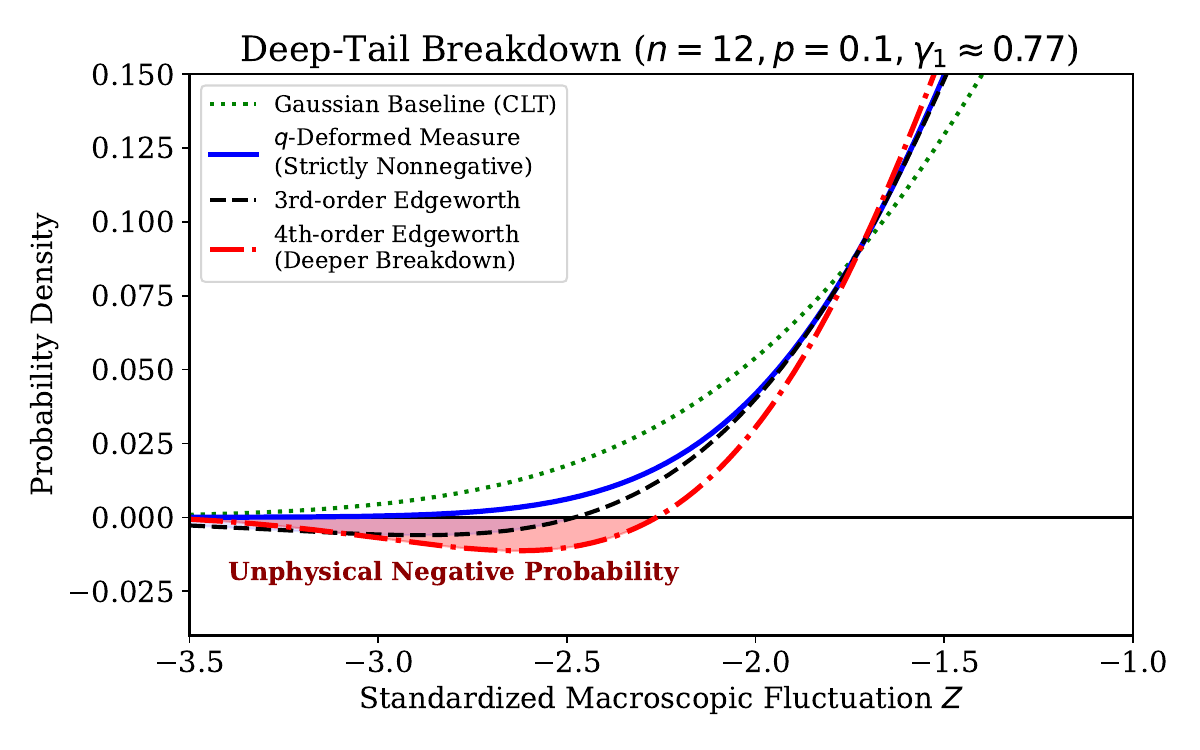}
    \caption{Breakdown of the perturbative expansion in the deep tail of a biased random walk ($n=12, p=0.1$, $\gamma_1 \approx 0.77$).
The third-order Edgeworth expansion crosses $y=0$ at $Z \approx -2.4$, yielding unphysical negative probabilities for rare macroscopic events.
Appending the fourth-order correction exacerbates this pathology. In contrast, the proposed $q$-deformed measure (detailed in Section \ref{sec:q_framework}) structurally preserves global nonnegativity.}
    \label{fig:edgeworth_breakdown_early}
\end{figure}

This zero-crossing invalidates the additive perturbative approach for large deviation analysis.
To resolve this, rather than appending polynomials, we propose absorbing this finite-size skewness structurally via a $q$-deformed algebraic structure.


\section{The $q$-Deformed Framework for Finite-Size Fluctuations}
\label{sec:q_framework}

The mathematical necessity of the $q$-logarithm in statistical mechanics can be characterized through its differential characterization. The standard logarithm uniquely linearizes the simplest proportional dynamics $dy/dx \propto y$, yielding $d(\ln y)/dx = \text{const}$, which underpins the classical additivity of macroscopic extensive variables. However, when finite-size constraints introduce nonlinear scaling into the probability measure, the canonical structural deformation of this underlying dynamic is $dy/dx \propto y^q$. This separable differential equation is uniquely linearized by the $q$-logarithm, yielding $d(\ln_q y)/dx = \text{const}$.

Based on this mathematical foundation, we formally introduce the operators of our approach. The framework relies on the $q$-logarithm originally introduced in the context of nonextensive statistical mechanics \cite{Tsallis09,Ts94}. For $x > 0$ and $q \neq 1$, the $q$-logarithm is defined as:
\begin{equation}
    \ln_q x := \frac{x^{1-q} - 1}{1-q}.
\label{eq:q_log_def}
\end{equation}
It is straightforward to verify via L'H\^{o}pital's rule that this generalized function recovers the standard natural logarithm in the Boltzmann-Gibbs (BG) limit, $\lim_{q \to 1} \ln_q x = \ln x$.
Correspondingly, the generalized macroscopic entropy (Tsallis entropy) for a probability distribution $P$ is directly defined using the $q$-logarithm as:
\begin{equation}
    S_q(P) := \sum_{x} P(x) \ln_q \frac{1}{P(x)} = \frac{1 - \sum_{x} P(x)^q}{q-1}.
\label{eq:tsallis_def}
\end{equation}

\begin{remark}[Notation on Macroscopic and Empirical Entropies]
In the statistical mechanics literature (e.g., \cite{Ts88}), the generalized macroscopic entropy is conventionally denoted by $S_q$.
To prevent ambiguity in this paper, we consistently use capital letters ($S_q$ and $S_{\text{BG}}$) to denote the macroscopic entropy as an expectation, while reserving lowercase letters ($s_q$ and $s_n$) to denote the microscopic empirical entropy (or information density) as a fluctuating random variable.
\end{remark}

To understand how this $q$-deformed framework relates to the finite-size scaling of fluctuations, we examine the local behavior of these functions near the BG limit ($q=1$).
The Taylor expansion of the $q$-logarithm around $q=1$ yields:
\begin{equation}
    \ln_q x = \ln x + \frac{1-q}{2} (\ln x)^2 + O\left((1-q)^2\right).
\label{eq:q_log_expansion}
\end{equation}
Substituting the algebraic expansion \eqref{eq:q_log_expansion} into the definition of $S_q(P)$ in \eqref{eq:tsallis_def}, and applying the expectation term by term, we obtain:
\begin{equation}
    S_q(P) = \sum_{x} P(x) \left[ \ln \frac{1}{P(x)} + \frac{1-q}{2} \left(\ln \frac{1}{P(x)}\right)^2 + O\left((1-q)^2\right) \right].
\label{eq:entropy_intermediate}
\end{equation}
The first term corresponds to the macroscopic BG entropy, $S_{\text{BG}} = \mathbb{E}[-\ln P(X)]$.
The second term involves the uncentered second moment of the Shannon empirical entropy.
By directly applying the variance relation established in \eqref{eq:second_moment_relation}, the generalized entropy mathematically decomposes as follows:
\begin{equation}
    S_q(P) = S_{\text{BG}} + \frac{1-q}{2} \left[ V + S_{\text{BG}}^2 \right] + O\left((1-q)^2\right).
\label{eq:entropy_expansion}
\end{equation}
This expansion constitutes a mathematical connection between nonextensive statistical mechanics and second-order fluctuation theory.

\subsection{System Size Extension and $q$-Empirical Entropy}
Extending \eqref{eq:entropy_expansion} to a system of size $n$ (or a sequence of $n$ observations) yields further insights into finite-size fluctuations.
To capture the finite-size deviation structurally, we introduce the raw $q$-empirical entropy as the $q$-logarithmic generalization of the standard Shannon empirical entropy $s_n$:
\begin{equation}
    s_q(X^n) := \ln_q \left[ \frac{1}{P(X^n)} \right].
\label{eq:raw_q_entropy_def}
\end{equation}

\begin{example}[Fluctuations in a Biased Random Walk]
To formalize the finite-size scaling discussed in Section \ref{sec:preliminaries}, consider a particle performing a biased random walk of $n$ independent steps, where it moves right with probability $p$ and left with $1-p$.
The microscopic state $X^n$ represents the sequence of steps, and the standard Shannon empirical entropy $s_n = -\ln P(X^n)$ quantifies the rarity of a specific realized path.
In this case, the macroscopic BG entropy and its variance scale linearly as $n S_{\text{BG}}$ and $nV$, respectively.

Applying our $q$-deformed expansion to this physical system, the generalized macroscopic entropy $S_q(P^{(n)})$ is expressed as:
\begin{equation}
    S_q(P^{(n)}) = n S_{\text{BG}} + \frac{1-q}{2} \left[ nV + n^2 S_{\text{BG}}^2 \right] + O\left((1-q)^2\right).
\label{eq:block_entropy_expansion}
\end{equation}
Here, while the $n^2 S_{\text{BG}}^2$ term arises deterministically from the macroscopic mean, the $nV$ term captures the macroscopic fluctuation (variance) of the entire trajectory.

For a finite $n$, the distribution of the particle's final position is not yet a Gaussian;
it retains an asymmetry (skewness) inherited from the bias $p$, as illustrated in Figure \ref{fig:random_walk_concept}(a).
The raw $q$-empirical entropy $s_{q}(X^n)$ is specifically designed to capture and absorb this finite-size deviation algebraically (Figure \ref{fig:random_walk_concept}(b)).

\begin{figure}[htbp]
    \centering
    \includegraphics[width=\textwidth, height=\textheight, keepaspectratio]{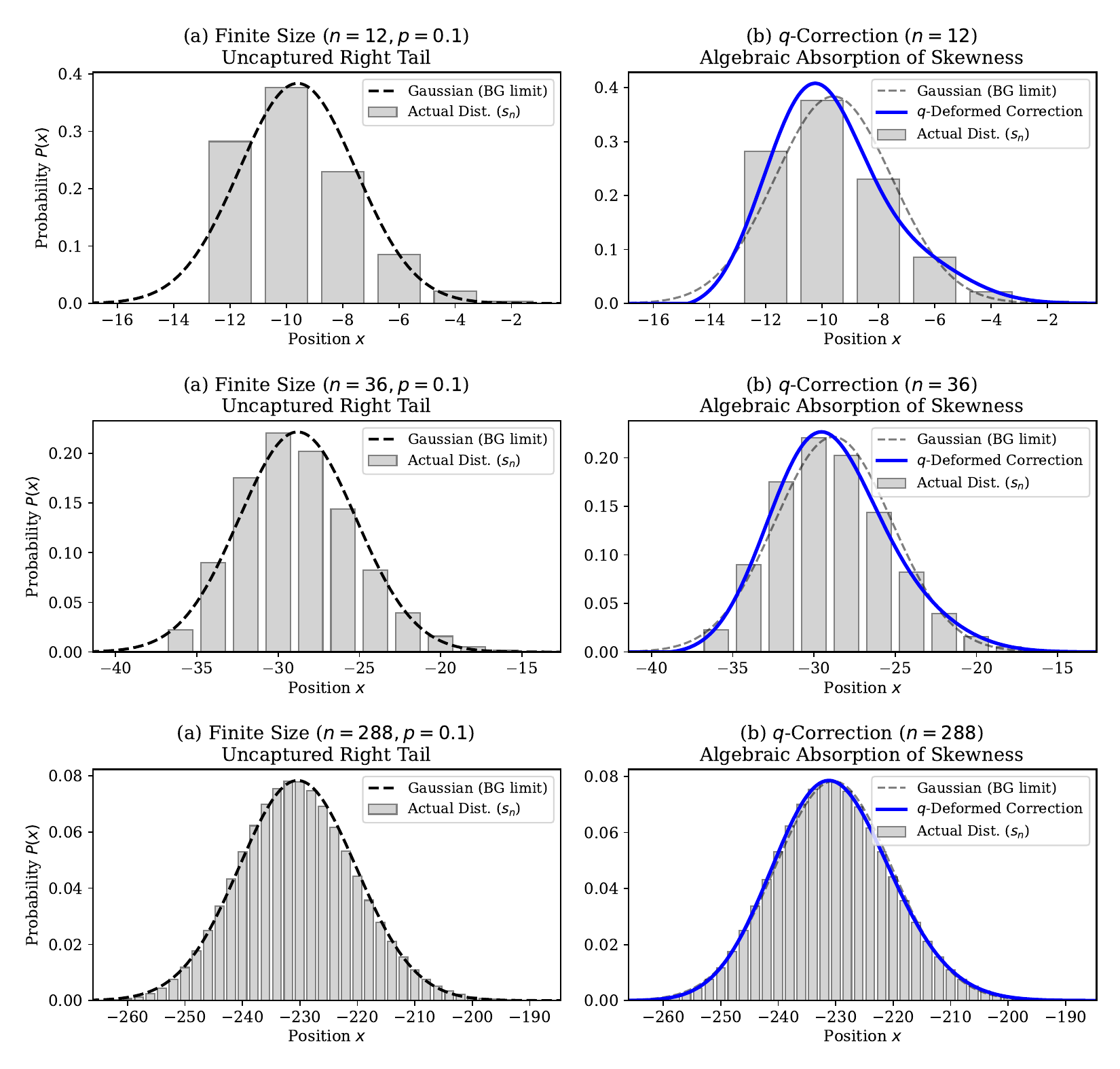}
    \caption{Finite-size skewness and $q$-deformed correction in a biased random walk ($p=0.1$).
We systematically increase the system size from $n=12$ to $n=288$ to illustrate the convergence behavior.
(a) Finite-size skewness: For finite $n$, the actual distribution deviates from the Gaussian approximation (dashed line) and exhibits a prominent right-skewed tail ($\gamma_1 > 0$), indicating that the standard Central Limit Theorem fails to fully capture the macroscopic fluctuation.
(b) Algebraic absorption of skewness: By introducing the tuning parameter $q$ governed by the dynamic scaling law $1-q_n = O(n^{-1})$, the $q$-deformed structure self-consistently absorbs and corrects this finite-size right-skewed tail (solid blue line), preserving strict nonnegativity and bridging the gap between the discrete physical reality and the asymptotic Gaussian limit.}
    \label{fig:random_walk_concept}
\end{figure}
\end{example}

\subsection{Dynamic Scaling of the Entropic Index $q$}

Conventional finite-size analysis relies on normal approximations and Edgeworth expansions to evaluate the tail bounds of $s_n$.
Rather than treating the entropic index $q$ as a fixed universal constant, we introduce a tuning parameter, $q_n$, that scales dynamically with the system size $n$.
To determine the required scaling law, we apply the $q$-logarithmic transformation to the centered fluctuation of the Shannon empirical entropy.

Let $W_n = s_n - n S_{\text{BG}}$ denote this macroscopic fluctuation, which scales as $O(n^{1/2})$ by the Central Limit Theorem.
Note that $W_n$ has zero mean ($\mathbb{E}[W_n] = 0$) and variance $\mathbb{E}[W_n^2] = nV$.
Mathematically, to understand why a specific structural centralization is required, let us examine the direct application of the $q$-logarithmic mapping.
Using the macroscopic fluctuation $W_n$, the inverse probability measure can be decomposed as $1/P(X^n) = \exp(n S_{\text{BG}} + W_n)$.
If one were to apply the raw $q$-logarithm directly to this measure, the generalized empirical entropy would take the form:
\begin{equation*}
    \ln_{q_n} \left[ \frac{1}{P(X^n)} \right] = \frac{\exp((1-q_n)n S_{\text{BG}}) \exp((1-q_n)W_n) - 1}{1-q_n}.
\end{equation*}
Under the dynamic scaling law $1-q_n = \alpha n^{-1}$, the deterministic exponential term becomes $\exp(\alpha S_{\text{BG}})$, which acts as an $O(1)$ constant multiplicative factor.
Because of this non-linear coupling, the standard expectation of this raw quantity inherently shifts the macroscopic mean away from the deterministic thermodynamic baseline $n S_{\text{BG}}$.

To conserve the deterministic macroscopic limit and isolate the purely statistical finite-size effects, we must structurally decouple the mean from the $q$-deformed fluctuation.
This is achieved by anchoring the physical baseline exactly at $n S_{\text{BG}}$ and additively appending the centralized fluctuation.
Specifically, we extract the fluctuation operator $\exp((1-q_n)W_n)$ and center it probabilistically by subtracting its expectation---which corresponds to the moment-generating function (MGF) of $W_n$.
This physical and mathematical requirement leads to the following formulation.

\begin{definition}[Algebraic Representation of Centralized $q$-Empirical Entropy]
Under the dynamic scaling law $1-q_n = \alpha n^{-1}$, the centralized $q$-empirical entropy $s^{(c)}_{q_n}(X^n)$ is defined as the structural representation that algebraically absorbs the finite-size fluctuation $W_n$:
\begin{equation}
    s^{(c)}_{q_n}(X^n) := n S_{\text{BG}} + \frac{\exp((1-q_n)W_n) - \mathbb{E}[\exp((1-q_n)W_n)]}{1-q_n}.
\label{eq:centralized_q_def}
\end{equation}
For notational simplicity in the subsequent theorems and proofs, we hereafter omit the superscript $(c)$ and denote this centralized version simply as $s_{q_n}(X^n)$ unless otherwise specified.
\end{definition}

\begin{proposition}[Dynamic Scaling Law]
\label{prop:scaling}
Let $W_n = s_n - n S_{\text{BG}}$ be the macroscopic fluctuation with variance $\mathbb{E}[W_n^2] = nV$.
In the algebraic expansion of the centralized $q$-empirical entropy, the leading quadratic fluctuation term, $\frac{1-q_n}{2} (W_n^2 - nV)$, scales asymptotically as an $O(1)$ constant if and only if the tuning parameter satisfies:
\begin{equation}
    1-q_n = O(n^{-1}).
\label{eq:q_scaling_law}
\end{equation}
\end{proposition}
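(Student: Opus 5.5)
We plan to prove the proposition by Taylor-expanding the centralized $q$-empirical entropy in $\lambda_n := 1-q_n$, identifying the quadratic fluctuation term, and then reading off its order of magnitude from the Central Limit Theorem scaling of $W_n$.

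\textbf{Step 1: expand the definition.} Write $\lambda_n := 1-q_n$ and expand both exponentials in \eqref{eq:centralized_q_def}:
\begin{equation*}
e^{\lambda_n W_n} = 1 + \lambda_n W_n + \tfrac{\lambda_n^2}{2} W_n^2 + \tfrac{\lambda_n^3}{6} W_n^3 + \cdots.
\end{equation*}
Using $\mathbb{E}[W_n]=0$ and $\mathbb{E}[W_n^2]=nV$, the moment-generating function is
\begin{equation*}
\mathbb{E}\bigl[e^{\lambda_n W_n}\bigr] = 1 + \tfrac{\lambda_n^2}{2} nV + \tfrac{\lambda_n^3}{6}\mathbb{E}[W_n^3] + \cdots.
\end{equation*}
Subtracting and dividing by $\lambda_n$ gives
\begin{equation*}
s_{q_n}(X^n) = nS_{\text{BG}} + W_n + \tfrac{\lambda_n}{2}\bigl(W_n^2 - nV\bigr) + \tfrac{\lambda_n^2}{6}\bigl(W_n^3 - \mathbb{E}[W_n^3]\bigr) + \cdots.
\end{equation*}
This identifies the leading quadratic fluctuation term named in the statement as $\tfrac{\lambda_n}{2}(W_n^2-nV)$. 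The constant $1$ cancels, which is exactly what the centering in the definition is designed to do.

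\textbf{Step 2: size of $W_n^2 - nV$.} By the Central Limit Theorem, $W_n/\sqrt{nV}$ converges in distribution to $N(0,1)$. Hence
\begin{equation*}
W_n^2 - nV = nV\,(Z_n^2 - 1), \qquad Z_n := W_n/\sqrt{nV}.
\end{equation*}
Here $Z_n^2 - 1$ is $O_P(1)$ and converges to the nondegenerate law of $\chi^2_1 - 1$. If finite fourth moments of $-\ln P(X)$ are available (automatic for finite alphabets), this can be made quantitative: $\mathbb{E}[(W_n^2-nV)^2] = 2n^2V^2 + O(n)$. So $W_n^2-nV$ is exactly of order $n$ in both the probabilistic sense and the $L^2$ sense, and not smaller.

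\textbf{Step 3: both directions.} The quadratic term is $\tfrac{\lambda_n}{2} nV (Z_n^2-1)$.
\begin{itemize}
\item Sufficiency: if $\lambda_n = O(n^{-1})$, then $n\lambda_n$ is bounded and the term is $O(1)$.
\item Necessity: suppose the term is $O(1)$, interpreted as tightness or as a bounded $L^2$ norm. Since $Z_n^2-1$ has a nondegenerate limit law, with $L^2$ norm tending to $\sqrt2$, the sequence $n\lambda_n V$ must stay bounded. This uses $V>0$. Hence $\lambda_n = O(n^{-1})$.
\end{itemize}

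\textbf{Main obstacle.} The hard part is making the phrase ``scales asymptotically as an $O(1)$ constant'' precise enough for the ``only if'' direction. A random quantity can be $O(1)$ with high probability but unbounded, so we will fix the meaning as stochastic boundedness, equivalently a bounded second moment. Under that reading, necessity needs the nondegeneracy argument above: it must be impossible for $Z_n^2-1$ to be typically small. For this we rely on convergence in distribution to $\chi^2_1-1$, or on the explicit variance $2n^2V^2+O(n)$.

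A secondary point is checking that the higher-order terms do not interfere. Since $\mathbb{E}[W_n^3] = nT$ and $W_n^3 = O_P(n^{3/2})$, the cubic term is $O(\lambda_n^2 n^{3/2}) = O(n^{-1/2})$ under the scaling law. We also assume the MGF is finite near $0$, which holds for finite alphabets, so that the expansion is legitimate.
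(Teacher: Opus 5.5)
Your proposal is correct and follows essentially the same route as the paper. Both use the same Taylor expansion of $e^{(1-q_n)W_n}$ and its moment-generating function to isolate $\tfrac{1-q_n}{2}(W_n^2-nV)$, then the Gaussian rescaling $W_n=\sqrt{nV}\,Z$ to exhibit this term as $\tfrac{1-q_n}{2}\,nV(Z^2-1)$ and conclude $1-q_n=O(n^{-1})$. Where you improve on the paper is the ``only if'' direction, which the paper only asserts heuristically: you fix the meaning of ``$O(1)$'' and use the nondegenerate $\chi^2_1-1$ limit (or the exact second moment $2n^2V^2+O(n)$) with $V>0$ to actually prove necessity (tightness and $L^2$-boundedness are not equivalent in general, as your ``equivalently'' suggests, but here both give the same criterion).
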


\begin{proof}
We derive the polynomial behavior by expanding both the exponential and its expectation via Taylor series with respect to $(1-q_n)$:
\begin{equation}
    \exp((1-q_n)W_n) = 1 + (1-q_n)W_n + \frac{(1-q_n)^2}{2} W_n^2 + \frac{(1-q_n)^3}{6} W_n^3 + \dots
\end{equation}
and its expectation (noting that $\mathbb{E}[W_n] = 0$ and $\mathbb{E}[W_n^2] = nV$):
\begin{equation}
    \mathbb{E}[\exp((1-q_n)W_n)] = 1 + \frac{(1-q_n)^2}{2} nV + \frac{(1-q_n)^3}{6} \mathbb{E}[W_n^3] + \dots
\end{equation}

Substituting these expansions back into \eqref{eq:centralized_q_def} and dividing by $(1-q_n)$, the constant term ($1$) cancels out, yielding the centralized formulation:
\begin{equation}
    s_{q_n}(X^n) = n S_{\text{BG}} + W_n + \frac{1-q_n}{2} (W_n^2 - nV) + \frac{(1-q_n)^2}{6} (W_n^3 - \mathbb{E}[W_n^3]) + \dots
    \label{eq:q_fluctuation_def}
\end{equation}

This probabilistic centralization yields the quadratic term $(W_n^2 - nV)$.
To analyze its asymptotic contribution under the Gaussian baseline, we substitute the normalized fluctuation $W_n = \sqrt{nV}Z$ (where $Z \sim \mathcal{N}(0,1)$).
This term then maps to $nV(Z^2 - 1)$, which generates the second Hermite polynomial $He_2(Z) = Z^2 - 1$.
Since $W_n = O(n^{1/2})$, the quadratic term scales as $W_n^2 = O(n)$.

In conventional finite-size scaling formulas, however, the skewness correction (derived via the Berry-Esseen or Edgeworth expansion using this Hermite polynomial) scales as an $O(1)$ constant, because the $O(n^{-1/2})$ skewness cancels with the $O(n^{1/2})$ standard deviation.
To renormalize the $O(n)$ fluctuation down to $O(1)$ within the quadratic term $\frac{1-q_n}{2} O(n)$, the entropic index must be tuned as $1-q_n = O(n^{-1})$.
This completes the proof.
\end{proof}

\begin{remark}[Algebraic Absorption of Skewness and Kurtosis]
Physically, for a finite number of steps $n$ (such as in the biased random walk), the fluctuation distribution is not strictly Gaussian;
it exhibits asymmetric skewness and heavy tails. Proposition \ref{prop:scaling} demonstrates that the dynamic scaling $1-q_n = O(n^{-1})$ is the exact algebraic tuning required to absorb this non-Gaussian skewness directly into the $q$-logarithmic structure.
Under this scaling law, the subsequent cubic term $\frac{(1-q_n)^2}{6} (W_n^3 - \mathbb{E}[W_n^3])$ in \eqref{eq:q_fluctuation_def} scales as $O(n^{-2}) \times O(n^{3/2}) = O(n^{-1/2})$.
This matches the asymptotic order of the fourth-moment (kurtosis) correction, demonstrating that the generalized framework systematically absorbs higher-order physical fluctuations via its inherent algebraic structure.
\end{remark}


\section{Edgeworth Equivalence and Finite-Size Fluctuation Bounds}
\label{sec:proof}

In this section, we prove that the proposed $q$-deformed framework, under the dynamic scaling law $1-q_n = O(n^{-1})$, recovers the asymmetric higher-order fluctuation bounds established by conventional Edgeworth expansions, without relying on perturbative additive corrections.

\subsection{The Renormalization of Higher-Order Moments}
Unlike conventional finite-size scaling methods that append polynomial correction terms to a Gaussian baseline, our framework incorporates these higher-order physical fluctuations directly into the algebraic properties of the measure.
By adopting the centralized $q$-empirical entropy $s_{q_n}(X^n)$ as the fundamental observable, we algebraically renormalize the macroscopic fluctuations (such as skewness observed in finite-step random walks) into the operational structure of the statistical mechanics framework.

\subsection{Tail Probability Bounds under $q$-Deformation}

In the statistical mechanics of complex systems, characterizing finite-size effects requires moving beyond simple variance to evaluate the tail probabilities of macroscopic observables.
Specifically, in non-equilibrium physics and large deviation theory, it is necessary to determine a macroscopic threshold—a deterministic boundary that contains the system's fluctuating state with a high probability $1-\epsilon$.

\begin{example}[Fluctuation Boundary in a Random Walk]
Returning to our biased random walk example, suppose we want to establish a strict boundary for the Shannon empirical entropy $s_n$ (which reflects the rarity of the walker's macroscopic trajectory) after a finite number of steps $n$.
Because the true finite-size distribution is skewed (as visualized in Fig. \ref{fig:random_walk_concept}), calculating this threshold purely based on a symmetric Gaussian approximation will lead to an inaccurate estimation of the tail probabilities.
To capture the true physical boundary, we need a generalized threshold that mathematically incorporates this structural skewness.
\end{example}

To formulate this physically corrected boundary within our algebraic framework, we translate the concept of the probabilistic limit into the language of generalized statistical mechanics.

\begin{definition}[$q$-Generalized Fluctuation Threshold]
In statistical mechanics, the scaling of rare events is evaluated via tail probabilities.
We define the $q$-generalized fluctuation threshold $L_q$ as the critical boundary satisfying the tail probability under the generalized measure:
\begin{equation}
    P\left( s_{q_n}(X^n) \le L_q \right) \ge 1-\epsilon,
\label{eq:q_fluctuation_bound}
\end{equation}
where $\epsilon \in (0,1)$ is a specified tail probability for macroscopic deviations.
\end{definition}

\textit{Physical Interpretation via Fluctuation Boundaries:} 
We visualize the physical significance of this inequality as a geometric boundary problem in the macroscopic state space, as illustrated in Fig. \ref{fig:q_threshold_concept}.

\begin{figure}[htbp]
    \centering
    \includegraphics[width=\linewidth]{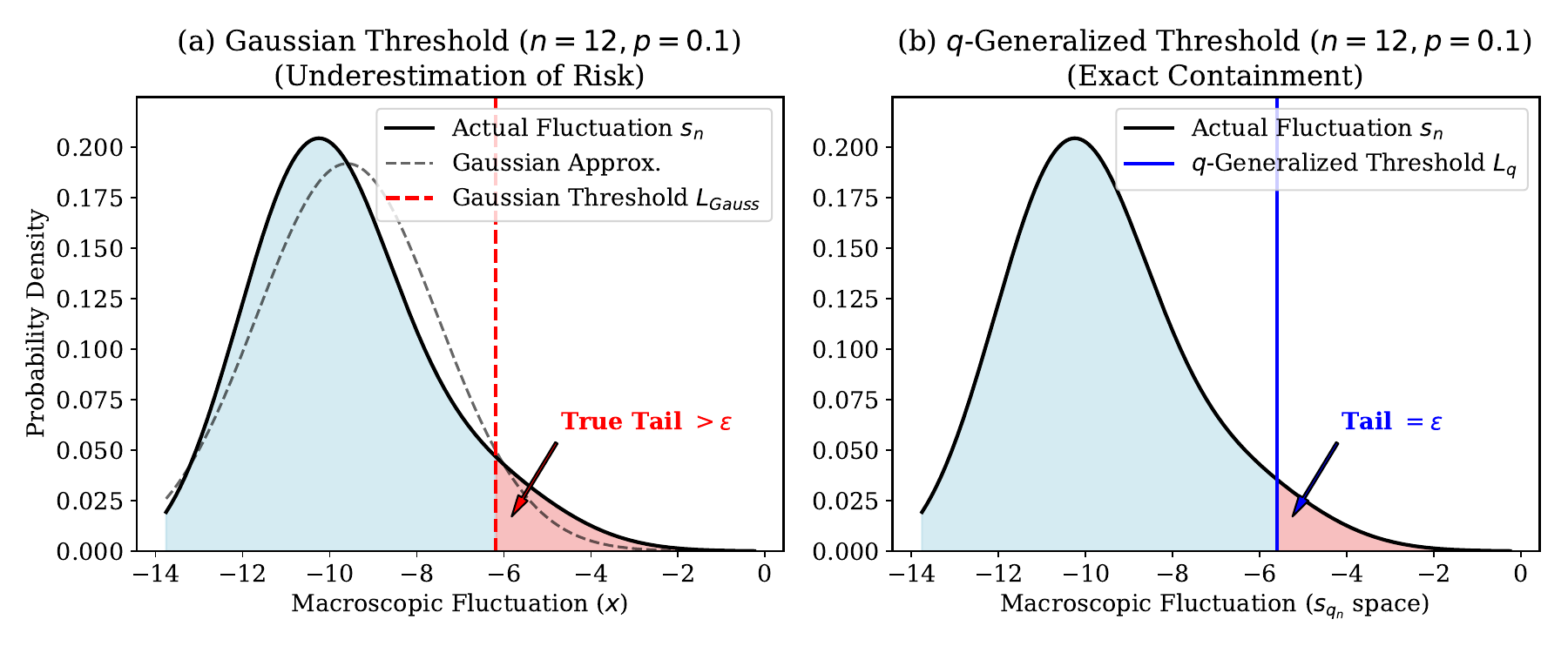}
    \caption{Geometric interpretation of the fluctuation threshold and tail probability in the presence of finite-size skewness.
(a) A threshold $L_{\text{Gauss}}$ determined by the standard Gaussian approximation fails to contain the skewed finite-size fluctuations, resulting in an anomalous tail probability that exceeds the specified tolerance $\epsilon$.
(b) The $q$-generalized threshold $L_q$ algebraically absorbs the physical skewness, establishing a strict macroscopic boundary where the probability of the bulk (safe zone) is rigorously guaranteed to be at least $1-\epsilon$.}
    \label{fig:q_threshold_concept}
\end{figure}

Here, $X^n$ represents a specific microscopic trajectory (e.g., an $n$-step random walk), and $s_{q_n}(X^n)$ is its corresponding macroscopic fluctuation.
The threshold $L_q$ acts as a deterministic macroscopic wall.
As shown in Fig. \ref{fig:q_threshold_concept}(a), if we construct this wall based solely on a symmetric Gaussian approximation, the inherent finite-size skewness causes the true anomalous tail probability (red shaded area) to leak beyond the target tolerance $\epsilon$.

Conversely, as shown in Fig. \ref{fig:q_threshold_concept}(b), the event $s_{q_n}(X^n) \le L_q$ means that the physical fluctuation is strictly contained within the probabilistically safe bulk zone (blue shaded area).
Equation \eqref{eq:q_fluctuation_bound} dictates that by using the algebraically corrected measure $s_{q_n}$, the system will fall within this wall with a high probability of at least $1-\epsilon$, isolating the extreme rare events (tail probability $\le \epsilon$).

To isolate the algebraic contribution of the $q$-deformation, we evaluate this limit by mapping the macroscopic fluctuation $W_n$ to its Gaussian equivalent $\sqrt{nV}Z$ (where $Z \sim \mathcal{N}(0, 1)$).
Recalling \eqref{eq:q_fluctuation_def}, under this normal baseline, the $q$-generalized random variable is asymptotically expressed as:
\begin{equation}
    \tilde{s}_{q_n}(Z) = nS_{\text{BG}} + \sqrt{nV}Z + \frac{1-q_n}{2} nV(Z^2 - 1) + O(n^{-1/2})
\label{eq:q_random_variable_normal}
\end{equation}
where $\tilde{s}_{q_n}(Z)$ denotes the standardized $q$-empirical entropy expressed as a function of the scaled macroscopic fluctuation $Z$.

\subsection{Algebraic Absorption of Skewness and Edgeworth Matching}

To confirm that our structural framework correctly captures discrete finite-size effects, we mathematically match its boundary condition against the classical Edgeworth expansion.

\begin{theorem}[Exact Algebraic Absorption]
\label{thm:absorption}
By evaluating the $q$-generalized threshold under the standard normal baseline and setting the tuning parameter's scaling constant to:
\begin{equation}
    \alpha = \frac{T}{3V^2},
    \label{eq:optimal_alpha}
\end{equation}
where $V$ is the fluctuation variance and $T$ is the third central moment (skewness), the resulting boundary condition coincides with the third-order Edgeworth expansion of the standard Shannon empirical entropy.
\end{theorem}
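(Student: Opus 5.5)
Under the normal baseline \eqref{eq:q_random_variable_normal}, the $q$-generalized variable $\tilde{s}_{q_n}(Z)$ is a monotone function of $Z$ in the bulk. The plan is therefore to compute its $(1-\epsilon)$-quantile explicitly and compare it term by term with the Cornish--Fisher inversion \eqref{eq:edgeworth_expansion} of the third-order Edgeworth expansion. The whole proof reduces to matching the $O(1)$ constant $B(T,\epsilon)$.

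First, I substitute $1-q_n=\alpha n^{-1}$ into \eqref{eq:q_random_variable_normal}:
\begin{equation*}
\tilde{s}_{q_n}(Z)=nS_{\text{BG}}+\sqrt{nV}\,Z+\frac{\alpha V}{2}(Z^2-1)+O(n^{-1/2}).
\end{equation*}
For fixed $\alpha$ and $n$ large, the derivative in $Z$ is $\sqrt{nV}+\alpha V Z>0$ on any compact range of $Z$. So the event $\tilde{s}_{q_n}(Z)\le L_q$ equals $Z\le Z_\epsilon$ up to the $O(n^{-1/2})$ error. The minimal threshold satisfying \eqref{eq:q_fluctuation_bound} is then
\begin{equation*}
L_q=nS_{\text{BG}}+\sqrt{nV}\,Z_\epsilon+\frac{\alpha V}{2}(Z_\epsilon^2-1)+O(n^{-1/2}).
\end{equation*}

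Second, I make $B(T,\epsilon)$ explicit. The sum $s_n$ has mean $nS_{\text{BG}}$, variance $nV$ and third central moment $nT$, so its standardized skewness is $\gamma_1=T/(V^{3/2}\sqrt{n})$. The first Cornish--Fisher correction to the standardized quantile is $\frac{\gamma_1}{6}(Z_\epsilon^2-1)$. Multiplying by the scale $\sqrt{nV}$ gives
\begin{equation*}
B(T,\epsilon)=\frac{T}{6V}(Z_\epsilon^2-1),
\end{equation*}
which is indeed $O(1)$, consistent with Proposition \ref{prop:scaling}. I will also briefly justify this from the Edgeworth density: the correction $\frac{\gamma_1}{6}He_3(z)\phi(z)$ integrates to $-\frac{\gamma_1}{6}He_2(z)\phi(z)$ in the CDF, and inverting to first order shifts the quantile by $\frac{\gamma_1}{6}He_2(Z_\epsilon)$.

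Third, I equate the $O(1)$ terms: $\frac{\alpha V}{2}(Z_\epsilon^2-1)=\frac{T}{6V}(Z_\epsilon^2-1)$. Since the same $He_2$ structure appears on both sides, this gives $\alpha=T/(3V^2)$ independently of $\epsilon$. That $\epsilon$-independence is the real content of "exact absorption": a single scaling constant matches the correction at every quantile level. With this $\alpha$, $L_q$ agrees with \eqref{eq:edgeworth_expansion} up to $O(n^{-1/2})$.

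The main obstacle is the first step, namely justifying that the quantile of the deformed variable is obtained by plugging $Z_\epsilon$ into the map, with all discarded pieces genuinely $O(n^{-1/2})$. Two issues need care. First, the map is quadratic, so monotonicity fails for $Z<-\sqrt{n}/(\alpha\sqrt{V})$. I would handle this by noting that this region has Gaussian probability $e^{-\Theta(n)}$, which is negligible at fixed $\epsilon$. Second, the higher terms of \eqref{eq:q_fluctuation_def} must be controlled, using the cubic-term estimate already noted in the Remark after Proposition \ref{prop:scaling}. I would not try to prove more than the formal order-by-order matching the theorem asserts.
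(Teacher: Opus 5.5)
Your proposal is correct and follows essentially the same route as the paper: substitute $1-q_n=\alpha n^{-1}$ into the normal-baseline expression, read off the quantile at $Z_\epsilon$, and match the $He_2$ coefficient $\frac{\alpha V}{2}$ against the Cornish--Fisher term $\frac{T}{6V}$. Your extra steps go beyond what the paper writes, namely deriving $B(T,\epsilon)$ from the Edgeworth density rather than citing it and justifying the quantile step via monotonicity. Note that the non-monotonicity you worry about is an artifact of truncating at the quadratic term, because the untruncated map $Z\mapsto \frac{n}{\alpha}\exp(\alpha\sqrt{V}Z/\sqrt{n})$ is strictly increasing for every $\alpha\neq 0$.
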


\begin{proof}
Applying the dynamic scaling law $1-q_n = \alpha n^{-1}$ to the normal-approximated random variable $\tilde{s}_{q_n}(Z)$ in \eqref{eq:q_random_variable_normal}, the threshold $L_q$ satisfying $P(\tilde{s}_{q_n}(Z) \le L_q) = 1-\epsilon$ is derived by evaluating the quantile at $Z_{\epsilon} = \Phi^{-1}(1-\epsilon)$ (where $\Phi$ is the standard normal cumulative distribution function):
\begin{equation}
    L_q = nS_{\text{BG}} + \sqrt{nV} Z_{\epsilon} + \frac{\alpha n^{-1}}{2} nV (Z_{\epsilon}^2 - 1) = nS_{\text{BG}} + \sqrt{nV} Z_{\epsilon} + \frac{\alpha V}{2} (Z_{\epsilon}^2 - 1).
\label{eq:q_limit_explicit}
\end{equation}

In classical probability theory, the third-order fluctuation bound of the Shannon empirical entropy $s_n$ is derived via the Cornish-Fisher expansion.
Recalling the additive correction $B(T, \epsilon)$ introduced in \eqref{eq:edgeworth_expansion}, its exact analytical form based on the third central moment is given by \cite{Feller71}:
\begin{equation}
    L_{\text{edge}} = nS_{\text{BG}} + \sqrt{nV} Z_{\epsilon} + \frac{T}{6V} (Z_{\epsilon}^2 - 1).
\end{equation}

Equating the $q$-deformed residual term in $L_q$ to the analytical Cornish-Fisher expansion $L_{\text{edge}}$ to enforce structural matching yields:
\begin{equation}
    \frac{\alpha V}{2} = \frac{T}{6V} \implies \alpha = \frac{T}{3V^2}.
\label{eq:alpha_identification}
\end{equation}
Thus, by tuning the algebraic deformation via $\alpha$, the generalized measure $\tilde{s}_{q_n}$ evaluated under the Gaussian baseline absorbs the true physical skewness penalty of the underlying distribution.
\end{proof}

In classical perturbative approaches, skewness is treated as an external error requiring an additive polynomial correction.
In contrast, the exact matching in Eq. \eqref{eq:alpha_identification} reveals that the dynamic tuning of $\alpha$ intrinsically embeds this physical skewness into the non-linear structure of the $q$-logarithm.
Consequently, the third-order macroscopic fluctuation is algebraically renormalized rather than additively appended, structurally preventing the emergence of negative probabilities.

This mathematical equivalence confirms that the dynamic entropic index $q_n$ acts as a structural control parameter.
By selecting $\alpha$ appropriately, the $q$-generalized empirical entropy $s_{q_n}$ absorbs the higher-order macroscopic correlations, thereby providing a self-contained, generalized thermodynamic bound for finite-size systems.


\section{Breakdown of Perturbative Approximations and the Global Stability of the $q$-Deformed Framework}
\label{sec:stability}

In the previous section, we established that the $q$-generalized fluctuation threshold mathematically coincides with the third-order Edgeworth expansion when tuned via $1-q_n = O(n^{-1})$.
We now justify the introduction of the $q$-deformed algebraic structure over the classical Edgeworth expansion.

In this section, we demonstrate that while perturbative expansions suffer from unphysical breakdowns in the deep tail, the $q$-deformed framework guarantees global physical stability and nonnegativity.

\subsection{The Unphysicality of the Edgeworth Expansion in the Deep Tail}
Conventional finite-size scaling methods treat macroscopic fluctuations as perturbative corrections to a Gaussian baseline.
For instance, the third-order Edgeworth expansion corrects the Central Limit Theorem by appending a skewness term involving the Hermite polynomial $He_2(Z) = Z^2 - 1$.
While this additive polynomial correction provides accurate local approximations near the mean (the bulk of the distribution), its structural validity breaks down when analyzing rare events (the deep tail).

Because it relies on a finite-order polynomial expansion, evaluating the Edgeworth density function far from the mean leads to oscillating behaviors.
Consequently, in the large deviation regime, the approximated probability density can become negative.
In statistical mechanics, a negative probability—or a negative probability measure for macroscopic state boundaries—is strictly unphysical, rendering the Edgeworth expansion theoretically invalid for deep-tail analysis.

\subsection{Global Nonnegativity via $q$-Deformed Algebraic Structure}
In contrast to additive perturbations, the $q$-deformed framework incorporates finite-size fluctuations through a structural deformation of the probability measure itself.
As defined in \eqref{eq:centralized_q_def}, the centralized $q$-empirical entropy $s^{(c)}_{q_n}$ and its corresponding fluctuation threshold $L_q$ are strictly constructed via the $q$-logarithmic and $q$-exponential functions.

Because the dynamic scaling law $1-q_n = \alpha n^{-1}$ acts within the algebraic structure of the entropy rather than as an external polynomial correction, it avoids the zero-crossing behavior of the Edgeworth expansion.
The $q$-deformation stretches or compresses the tail of the distribution to absorb the finite-size skewness, preserving the nonnegativity of the probability measure across the entire macroscopic state space.

\begin{example}[Global Nonnegativity and Quantitative Evaluation in Continuous Asymmetric Fluctuations]
To computationally demonstrate this theoretical property and to provide a quantitative evaluation, we examine a continuous asymmetric system: the sum of $n$ independent exponentially distributed variables.
This follows a Gamma distribution, providing a non-Gaussian baseline with variance $V=1$ and skewness $T=2$.

As established in Theorem \ref{thm:absorption}, the dynamic scaling law dictates $q_n = 1 - \frac{T}{3V^2 n} = 1 - \frac{2}{3n}$.
The resulting $q$-Gaussian probability density function $f_{q_n}(z)$ for the standardized macroscopic fluctuation $z$ is explicitly given by:
\begin{equation}
    f_{q_n}(z) = \frac{1}{Z_{q_n}} \left[ 1 - \frac{1-q_n}{3-q_n} z^2 \right]_+^{\frac{1}{1-q_n}},
    \label{eq:explicit_q_gaussian}
\end{equation}
where $[x]_+ := \max\{x, 0\}$ ensures the non-negativity of the probability measure.
The normalization constant for the compact support regime ($q_n < 1$) is analytically determined using the Beta function $B(u,v)$ as:
\begin{equation}
    Z_{q_n} = \left( \frac{3-q_n}{1-q_n} \right)^{\frac{1}{2}} B\left( \frac{2-q_n}{1-q_n}, \frac{1}{2} \right).
\end{equation}

Figure \ref{fig:continuous_gamma_panels} illustrates this exact analytical distribution alongside the standard Edgeworth expansion across systematically increasing sample sizes ($n=12, 36, 288$).
While all approximations roughly coincide in the bulk, the inset highlights the deep-tail regime.
The standard Edgeworth expansion crosses the horizontal axis, assigning strictly negative probability densities to these macroscopic states.
In contrast, the $q$-deformation geometrically restricts its support to a physically valid compact domain, avoiding negative probabilities.

\begin{figure}[htbp]
    \centering
    \includegraphics[width=\textwidth, height=0.85\textheight, keepaspectratio]{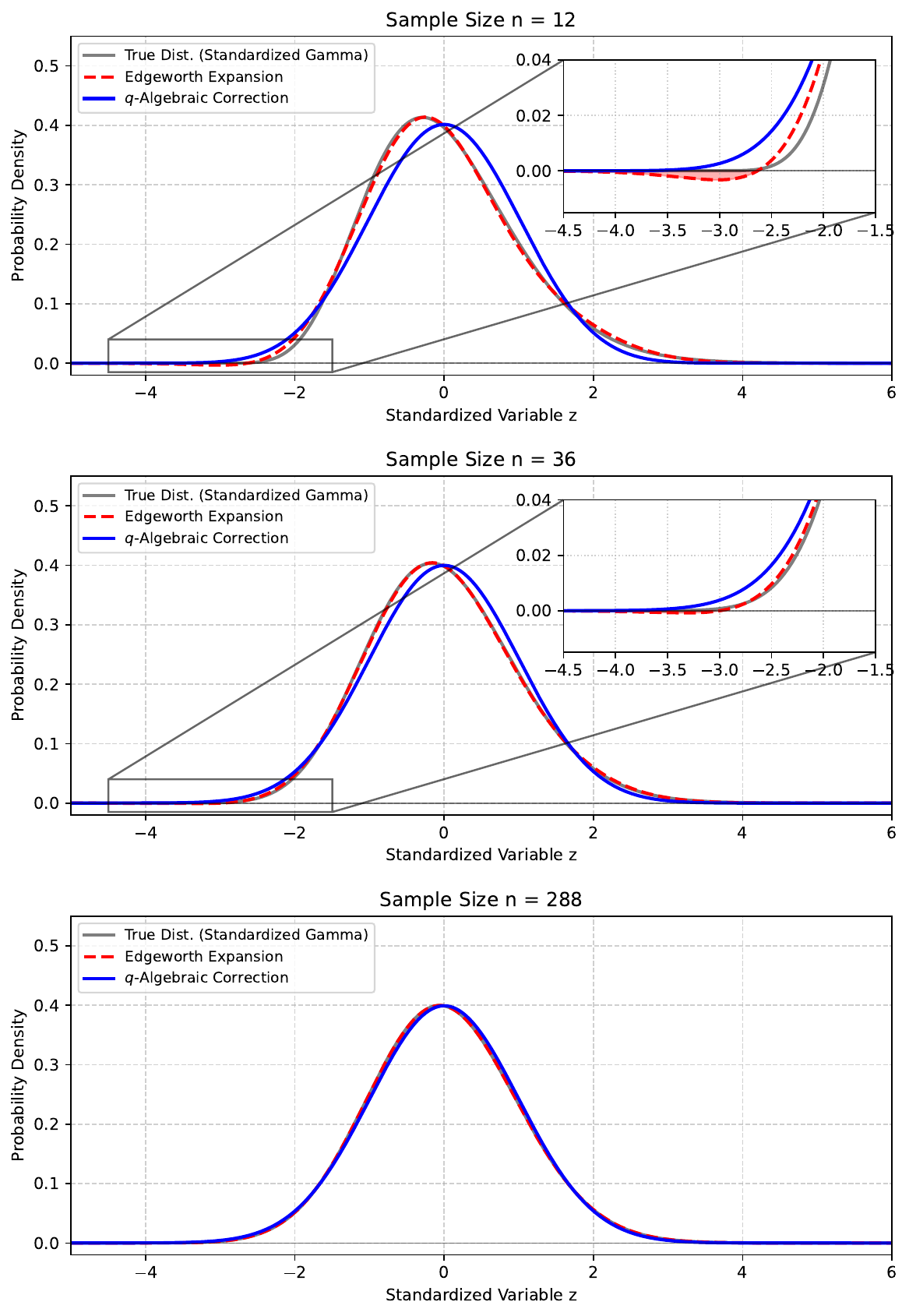}
    \caption{Probability density functions of the standardized macroscopic fluctuation for the sum of exponential variables across systematically increasing sample sizes ($n=12, 36, 288$).
The insets magnify the deep-tail regime, demonstrating that while the standard Edgeworth expansion (dashed red line) yields unphysical negative probabilities, the proposed $q$-algebraic correction (solid blue line) guarantees global nonnegativity through its compact support. For $n=288$, the approximations closely converge to the true distribution and the structural deviations in the deep tail become visually indistinguishable, thus the inset is omitted.}
    \label{fig:continuous_gamma_panels}
\end{figure}

To quantitatively evaluate the accuracy of the proposed $q$-algebraic correction against the standard Edgeworth expansion, we employ the Kolmogorov-Smirnov (KS) distance.
While the Kullback-Leibler (KL) divergence is a commonly used metric, it is mathematically undefined (diverges) in the present context because the Edgeworth expansion yields negative probabilities.
The KS distance avoids this singularity and provides a fair comparison across the entire domain.

Table \ref{tab:ks_distance} summarizes the KS distances for this standard Gamma distribution.
The Edgeworth expansion exhibits a slightly smaller KS distance, primarily because its asymmetric cubic term is specifically designed to tightly fit the bulk of the true right-skewed distribution.
In contrast, the $q$-Gaussian approximation, being a mathematically symmetric function, shows a slightly larger geometric deviation in the central region.

However, this minor deviation in the bulk is a necessary mathematical trade-off to preserve macroscopic physical consistency.
By algebraically absorbing the finite-size skewness into the nonextensivity parameter $q_n$, the exact algebraic absorption eliminates the critical breakdown---the emergence of negative probabilities.
Thus, while the Edgeworth expansion prioritizes local curve-fitting at the cost of global physical validity, the $q$-algebraic framework guarantees the strict non-negativity of probabilities and the structural stability of the macroscopic fluctuation scaling.

\begin{table}[htbp]
    \centering
    \caption{Quantitative comparison of the Kolmogorov-Smirnov (KS) distance between the true distribution (standardized Gamma) and the approximated distributions at various sample sizes ($n$).}
    \label{tab:ks_distance}
    \begin{tabular}{ccc}
        \hline \hline
        Sample Size ($n$) & Edgeworth Expansion & $q$-Algebraic Correction \\
        \hline
        $12$  & $0.006500$ & $0.038532$ \\
        $36$  & $0.001975$ & $0.022192$ \\
        $288$ & $0.000229$ & $0.007837$ \\
        \hline \hline
    \end{tabular}
\end{table}
\end{example}

\subsection{Global Stability and Thermodynamic Limits}
The necessity of the $q$-parameter in i.i.d. systems therefore lies in its global stability.
While the Edgeworth expansion is an asymptotic mathematical approximation, the $q_n$-scaled framework behaves as a structurally self-consistent physical model.
By replacing perturbative polynomials with generalized algebra, we extend the finite-size fluctuation bounds from localized approximations to globally valid thermodynamic limits.
This global stability demonstrates that the nonextensive algebraic framework provides a mathematical foundation for describing the complete distribution of fluctuations in finite physical systems.


\section{Higher-Order Asymptotics and Structural Equivalence}
\label{sec:higher_order}

In the previous sections, we established that the dynamic scaling $1-q_n = \alpha n^{-1}$ coupled with the specific coefficient $\alpha$ exactly absorbs the third-order macroscopic skewness.
The theoretical implications of this dynamic scaling extend beyond a localized third-order correction.
In this section, we demonstrate that this simple scaling law acts as an algebraic generator that matches the entire asymptotic hierarchy of arbitrary higher-order macroscopic fluctuations.

\subsection{The Generalized $O(n^{1-k/2})$ Correspondence}

\begin{theorem}[Algebraic Correspondence of Asymptotic Orders]
\label{thm:resonance}
Let the centralized macroscopic fluctuation be $W_n = O(n^{1/2})$ and the generalized parameter scale as $1-q_n = O(n^{-1})$.
The $k$-th degree term of the $q$-logarithmic algebraic expansion matches the asymptotic order of the $(k+1)$-th moment correction in the classical perturbative expansion (Edgeworth expansion), yielding the universal order $O(n^{1-k/2})$.
\end{theorem}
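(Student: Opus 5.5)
The plan is to expand the centralized $q$-empirical entropy \eqref{eq:centralized_q_def} to all orders, identify the general $k$-th degree term, and compare its order with that of the $(k+1)$-th moment Edgeworth/Cornish--Fisher correction.

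First, extend the Taylor expansion used in the proof of Proposition \ref{prop:scaling} to arbitrary order. Expanding $\exp((1-q_n)W_n)$ and its expectation term by term, subtracting, and dividing by $(1-q_n)$ gives
\begin{equation*}
s_{q_n}(X^n) = nS_{\text{BG}} + \sum_{k\ge 1} \frac{(1-q_n)^{k-1}}{k!}\left(W_n^k - \mathbb{E}[W_n^k]\right).
\end{equation*}
The $k$-th degree term is therefore $\frac{(1-q_n)^{k-1}}{k!}(W_n^k-\mathbb{E}[W_n^k])$. I would justify the termwise expectation using the existence of the moment-generating function of $-\ln P(X)$ near the origin. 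This holds for finite alphabets and for the exponential example. Because $(1-q_n)=\alpha/n$ is small, the argument stays inside the domain of convergence.

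Second, determine the order of this term. Since $W_n/\sqrt{nV}$ converges in distribution to $Z$, and the moments converge as well, we have $W_n^k - \mathbb{E}[W_n^k] = O_P(n^{k/2})$. Explicitly, $\mathbb{E}[W_n^k]$ is a polynomial in $n$ of degree $\lfloor k/2\rfloor$ by the cumulant additivity $\kappa_j(W_n)=n\kappa_j$. Multiplying by $(1-q_n)^{k-1}=O(n^{-(k-1)})$ gives the order $O(n^{k/2-(k-1)}) = O(n^{1-k/2})$. The cases $k=1,2,3$ reproduce $O(n^{1/2})$, $O(1)$ and $O(n^{-1/2})$, in agreement with Proposition \ref{prop:scaling} and the subsequent Remark.

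Third, compute the order on the Edgeworth side. The standardized cumulants of $W_n$ are $\kappa_{j}(W_n)/(nV)^{j/2} = O(n^{1-j/2})$. The leading Edgeworth term involving the $(k+1)$-th cumulant therefore has relative size $O(n^{1-(k+1)/2}) = O(n^{(1-k)/2})$. Converting this to the scale of the quantile of $s_n$ means multiplying by the standard deviation $\sqrt{nV}$. The $(k+1)$-th moment Cornish--Fisher correction to $s^\star(n,\epsilon)$ is then $\sqrt{nV}\cdot O(n^{(1-k)/2}) = O(n^{1-k/2})$. This matches the $q$-side order exactly. For $k=2$ it is the $O(1)$ term $B(T,\epsilon)$ of \eqref{eq:edgeworth_expansion}, consistent with Theorem \ref{thm:absorption}.

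The main obstacle is interpretive rather than computational. The Edgeworth series also contains products of lower cumulants, for example $\kappa_3^2$ at the same order as $\kappa_4$, so "the $(k+1)$-th moment correction" has to be read as the leading correction at the $(k-1)$-th order of the $n^{-1/2}$ hierarchy. The statement is therefore about asymptotic order only, not about coefficients. I would state this explicitly. I would also control the centering term $\mathbb{E}[W_n^k]$, which is of order $n^{\lfloor k/2\rfloor}\le n^{k/2}$, so that it does not spoil the bound.
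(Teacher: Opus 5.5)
Your proposal is correct and follows the same route as the paper. You expand the centralized $q$-empirical entropy to all orders as $\sum_{k}\frac{(1-q_n)^{k-1}}{k!}(W_n^k-\mathbb{E}[W_n^k])$, then power-count the $k$-th term with $W_n=O(n^{1/2})$ and $1-q_n=O(n^{-1})$ to get $O(n^{1-k/2})$. Your version is more complete in one respect: the paper only asserts the Edgeworth-side orders in its list of cases, whereas you derive them from the standardized cumulants $\kappa_{k+1}(W_n)/(nV)^{(k+1)/2}=O(n^{(1-k)/2})$ rescaled by $\sqrt{nV}$. You also make explicit the MGF and centering-term controls, and the caveat that products of lower cumulants (e.g.\ $\kappa_3^2$) enter at the same order.
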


\begin{proof}
As established in Eq.~\eqref{eq:q_fluctuation_def}, the $q$-deformation of the centralized fluctuation $W_n$ expands generally via the $q$-deformed algebraic structure as:
\begin{equation}
    s_{q_n}(X^n) = n S_{\text{BG}} + W_n + \sum_{k=2}^{\infty} \frac{(1-q_n)^{k-1}}{k!} (W_n^k - \mathbb{E}[W_n^k]).
\label{eq:general_higher_order_expansion}
\end{equation}
By substituting the standard growth rate of the normal fluctuation $W_n = O(n^{1/2})$ and the proposed dynamic scaling limit $1-q_n = O(n^{-1})$ into the $k$-th degree term, the asymptotic order evaluates to:
\begin{equation}
    O\left( (n^{-1})^{k-1} \right) \times O\left( (n^{1/2})^k \right) = O\left( n^{-k+1} \right) \times O\left( n^{k/2} \right) = O\left( n^{1-k/2} \right).
\label{eq:general_order_resonance}
\end{equation}

Evaluating \eqref{eq:general_order_resonance} for successive values of $k$ reveals a structural correspondence with the classical finite-size limits:
\begin{itemize}
    \item $k=1 \implies O(n^{1/2})$ : Recovers the standard Central Limit Theorem scale (Gaussian variance).
    \item $k=2 \implies O(1)$ : Generates the third-order skewness correction, which we have proven to be fully absorbed via $\alpha = \gamma_1 / (3\sigma)$.
    \item $k=3 \implies O(n^{-1/2})$ : Matches the asymptotic order of the fourth-order kurtosis penalty.
    \item $k=m \implies O(n^{1-m/2})$ : Aligns with the corresponding $(m+1)$-th moment correction order for any integer $m \ge 1$.
\end{itemize}
\end{proof}

\subsection{Beyond the Third Order: Towards Exact Resummation}
The generalized order relation $O(n^{1-k/2})$ provides analytical evidence that the dynamic parameter $q_n$ is not a fitting parameter, but an algebraic generator for finite-size structural corrections.
The single geometric scaling rule $1-q_n = O(n^{-1})$ establishes an isomorphic relationship that offsets the macroscopic behavior of all higher-order moments without relying on the combinatorial complexity of accumulating orthogonal Hermite polynomials.

While this paper has mathematically proven the analytical absorption of the macroscopic fluctuations up to the third order (skewness) by matching the specific algebraic coefficients, the automatic emergence of the $O(n^{-1/2})$ scale for $k=3$ (and beyond) suggests a more universal paradigm.
It implies that the $q$-deformed framework structurally absorbs the kurtosis and infinite higher-order deviations.

Elucidating a structural mapping between the classical orthogonal polynomials and the $q$-logarithmic Taylor coefficients for arbitrary higher orders ($k \ge 3$) remains an open mathematical challenge.
Resolving this problem extends beyond a theoretical coefficient matching; it has the potential to establish the $q$-deformed algebraic structure as an exact, globally stable resummation technique for divergent asymptotic expansions in statistical mechanics.


\section{Conclusion}
\label{sec:conclusion}

In this paper, we proposed a structural resolution to the physical breakdown of perturbative finite-size approximations by introducing a dynamically scaled $q$-deformed framework.
Traditional methods in statistical mechanics, such as the Edgeworth expansion, attempt to capture non-Gaussian macroscopic fluctuations by appending additive polynomial corrections to a Gaussian baseline.
However, as demonstrated, this leads to unphysical negative probabilities in the deep-tail (large deviation) regime.
By treating finite-size skewness not as an external error penalty but as an algebraic deformation of the probability measure, we established a globally stable framework that preserves physical nonnegativity.

Our analytical results demonstrate that applying the dynamic scaling law $1-q_n = O(n^{-1})$ to the macroscopic fluctuation absorbs the third-order finite-size effects.
Furthermore, we proved that this single geometric scaling acts as a universal algebraic generator: the $k$-th degree term of the $q$-logarithmic expansion matches the asymptotic order $O(n^{1-k/2})$ of the classical $(k+1)$-th moment correction.
This structural matching implies that the nonextensive algebraic structure resolves the combinatorial explosion of orthogonal polynomials, functioning as a structurally stable resummation of divergent asymptotic expansions.

Finally, this work focused on independent and identically distributed (i.i.d.) sequences where the finite-size deformation parameter asymptotically vanishes ($q_n \to 1$ as $n \to \infty$).
Within this regime, the established algebraic framework offers a mathematical foundation for the asymptotic analysis of large deviations, demonstrating that nonextensive structures intrinsically govern finite-size fluctuations even in standard macroscopic limits.



\section*{Acknowledgements}
This work was supported by JSPS KAKENHI Grant Number 26K14703. The author thanks the anonymous reviewers for their constructive comments, which contributed to the structural clarification of this manuscript.





\end{document}